\DeclareMathOperator{\cop}{copy}
\DeclareMathOperator{\opt}{opt}
\DeclareMathOperator{\leftjoin}{leftjoin}
\DeclareMathOperator{\includes}{incl}
\DeclareMathOperator{\notIncludes}{notIncl}
\DeclareMathOperator{\conj}{conj}
\DeclareMathOperator{\eval}{eval}
\DeclareMathOperator{\equalTo}{equalTo}
\DeclareMathOperator{\notEqualTo}{notEqualTo}
\DeclareMathOperator{\result}{result}
\DeclareMathOperator{\lis}{tl}
\DeclareMathOperator{\unio}{union}
\DeclareMathOperator{\unnest}{unest}
\DeclareMathOperator{\impl}{impl}
\DeclareMathOperator{\type}{type}
\DeclareMathOperator{\Person}{Person}
\DeclareMathOperator{\Researcher}{Researcher}
\newcommand\listingsize{\fontsize{9pt}{10pt}}
\RecustomVerbatimCommand{\Verb}{Verb}{fontsize=\listingsize}
\tiny\textcolor{darkgrey},
\definecolor{grey}{RGB}{130,130,130}
\definecolor{darkgrey}{RGB}{80,80,80}
\lstdefinelanguage{SPARQL}{
    keywords={SELECT, ASK, CONSTRUCT, DESCRIBE, WHERE, UNION, OPTIONAL, FILTER, DISTINCT, LIMIT, OFFSET, PREFIX, BASE, ORDER, BY, ASC, DESC, GROUP, HAVING, COUNT, SUM, MIN, MAX, AVG, SAMPLE, BIND, VALUES, MINUS, EXISTS, NOT, IN, STR, LANG, LANGMATCHES, DATATYPE, BOUND, COALESCE, IF, SAMETERM, ISIRI, ISURI, ISBLANK, ISLITERAL, NOW, YEAR, MONTH, DAY, HOURS, MINUTES, SECONDS, TIMEZONE, TZ, BNODE, RAND, ABS, CEIL, FLOOR, ROUND, CONCAT, STRLEN, UCASE, LCASE, ENCODE_FOR_URI, CONTAINS, STRSTARTS, STRENDS, STRBEFORE, STRAFTER, REPLACE, SUBSTR, REGEX, TRUE, FALSE, SERVICE, GRAPH, NAMED},
    keywordstyle=\color{blue}\bfseries,
    sensitive=true,
    comment=[l]{\#}, 
    morestring=[b]", 
    morestring=[b]', 
    stringstyle=\color{red}\ttfamily, 
    identifierstyle=\color{black}, 
    basicstyle=\ttfamily\small, 
    morecomment=[s]{/*}{*/}, 
    emph={a}, emphstyle=\color{blue}, 
    literate={?}{$\mathsf{?}$}{1} {*}{{\color{orange}$\mathsf{*}$}}{1}, 
}
\newcolumntype{?}{!{\vrule width 1pt}}
\begin{document}
\title{SPARQL in N3: SPARQL \textsc{construct} as a rule language for the Semantic Web \\(Extended Version)}
\titlerunning{SPARQL in N3}
%
\author{Dörthe Arndt\inst{1,2} \and
 William Van Woensel\inst{3} \and
 Dominik Tomaszuk\inst{4}}
 \authorrunning{D. Arndt et al.}
 \institute{Computational Logic Group, Technische Universität Dresden, 
 Dresden, Germany 
\and
 ScaDS.AI, Dresden/Leipzig, Germany\\
 \and
 Telfer School of Management, University of Ottawa, Ottawa, Canada\\
 \and
 University of Bialystok, Bialystok, Poland
 }
\maketitle              

\begin{abstract}

Reasoning in the Semantic Web (SW) commonly uses Description Logics (DL) via OWL2 DL ontologies, or SWRL for variables and Horn clauses. The Rule Interchange Format (RIF) offers more expressive rules but is defined outside RDF and rarely adopted. For querying, SPARQL is a well-established standard operating directly on RDF triples. We leverage SPARQL \textsc{construct} queries as logic rules, enabling (1) an expressive, familiar SW rule language, and (2) general recursion, where queries can act on the results of others. We translate these queries to the Notation3 Logic (N3) rule language, allowing use of existing reasoning machinery with forward and backward chaining. Targeting a one-to-one query–rule mapping improves exchangeability and interpretability. Benchmarks indicate competitive performance, aiming to advance the potential of rule-based reasoning in the SW.

\keywords{SPARQL  \and Notation3 \and Rule-based reasoning.}
\end{abstract}

\section{Introduction}
\label{sec:intro}
In our experience, most reasoning on the Semantic Web (SW) relies on Description Logics (DL) in the form of OWL2 DL \cite{motik2012} ontologies. DL is less expressive than first-order logic, and expressing the desired (e.g., rule-based) logic can be challenging. 
SWRL~\cite{horrocks2004} extends OWL interpretations with variables and Horn-like rules,  
but operates on OWL rather than RDF, and cannot generate blank nodes.
The more expressive Rule Interchange Format (RIF) offers 3 dialects to cover the different needs of rule systems. However, it is defined outside of the RDF data model, and is, to the best of our knowledge, not widely adopted. 
We thus argue that the potential of logic SW reasoning has not been realized.

On the other hand, when it comes to querying RDF data, SPARQL~\cite{harris2013sparql} is an expressive and well-established standard, with native support for RDF triples. 
Notably, since SPARQL \textsc{construct} queries are closed, i.e., RDF graphs are used as both input and output, they can be used to represent logical rules: if the \textsc{where} clause holds, then the query template can be inferred.
The 
expressivity and popularity of SPARQL can be leveraged to express logic rules 
on the SW. 
Moreover, this adds general recursion in SPARQL: i.e., when executed as rules, \textsc{construct} queries can operate on the results of other \textsc{construct} queries. In this work, we support SPARQL~1.1 query patterns, filters, and the \textsc{select} and \textsc{construct} forms. 

Others have proposed using SPARQL as a logic rule language. Two types of approaches exist: (a) translating SPARQL into a non-SW logic rule language, namely Datalog~\cite{polleres2007,gottlob2015beyond,angles2023}, so that rule engines can be used for reasoning; 
and (b) extending an existing query engine with reasoning capabilities~\cite{reutter2021,knublauchSPIN}.
However, Datalog does not follow SW principles; 
and while query engines allow dealing with large datasets and many joins, they lack backward reasoning capabilities.

We propose a translation of SPARQL into the Notation3 (N3)~\cite{N3Logic,woensel2023n3} logic rule language, called SiN3 (SPARQL-in-N3).
This allows leveraging the existing rule machinery (eye~\cite{eyepaper}, cwm~\cite{cwm}, jen3~\cite{jen3}) for implementing backward and forward reasoning reasoning.
Moreover, as N3 follows SW principles, translations are not restricted to local use and can be exchanged on the SW. 
A ``runtime'' set of rules, 
which is novel in the literature on SPARQL to rule translation, allows (1) translating a single query into a single rule, which makes it easier to exchange, and even manually tweak, translations; and even (2)
a true cross-fertilization, by also allowing SPARQL features to be used in N3.
To illustrate its meta-reasoning capabilities, we used N3 itself to translate 
from a triple-based SPARQL (SPIN~\cite{knublauchSPIN}) to N3 rules.
Our evaluation 
shows a competitive performance for our preliminary work. 
A demo paper was previously published about this work~\cite{sin3_demo}.



\textit{Contributions}. (1) We translate SPARQL into the SW-compliant N3 rule language; (2) In doing so, we add general recursion to SPARQL, a feature that is still lacking~\cite{reutter2021,angles2023}; (3) We provide correctness results for our approach; (4) We evaluate and compare our approach to other similar systems, using large datasets from the literature. 
As far as we know, this is the first effort to translate SPARQL into a SW rule language, with a one-to-one mapping between queries and rules. 


\textit{Paper Organization.}
Section~\ref{sec:bground} introduces the overall idea of our approach. Section~\ref{sec:rel_work} discusses related work. 
Sections~\ref{sec:sparql} and \ref{sec:n3} explain SPARQL, and N3, respectively. 
Section \ref{sec:sparql2n3} describes our translation of SPARQL queries into N3 rules, including complex constructs and runtime rules.
Section~\ref{sec:eval} evaluates the performance of our method. 
Finally, Section~\ref{sec:conclusions} concludes the paper. 

\section{Motivation}
\label{sec:bground}

To motivate our choice for logic rule language, 
we provide examples of SPARQL and N3 which illustrate how closely these two frameworks are related. 

SPARQL and N3 both operate on RDF triples like for example:\footnote{For brevity, we omit prefixes if they are not relevant or well-known.
}
\begin{equation}\label{john}
\texttt{:John a :Researcher.}
\end{equation}
With SPARQL we can query for patterns on sets of triples and retrieve variable bindings (\textsc{select}) or create new graphs (\textsc{construct}).
Applying the query 
\begin{equation}\label{construct}
	\texttt{CONSTRUCT \{?x a :Person . \} WHERE \{?x a :Researcher . \} }
\end{equation}
on a graph containing triple \ref{john} results in:
\begin{equation}\label{john2}
\texttt{:John a :Person.}
\end{equation}
Since a new graph is being constructed from the original graph, this is comparable to
the application of the following N3 rule  
\begin{equation}\label{n3rule}
	\texttt{\{?x a :Researcher .\}=>\{?x a :Person .\}}
\end{equation}
where \texttt{=>} is a logical implication; the conclusion (second clause) is inferred if the premise holds (first clause). 
Applying rule \ref{n3rule} on triple \ref{john} also yields triple \ref{john2}. 

An important difference between executing rule~\ref{n3rule} and query~\ref{construct}, however, is that N3 applies rules recursively: 
triple \ref{john2} becomes part of the knowledge, and other rules can be applied on this derivation. 
In contrast to that, SPARQL is a query language and only produces triple \ref{john2} to provide it as output to the user. 

When choosing a logic rule language, 
the close relation in terms of syntax between SPARQL and N3 makes the latter a good candidate: users coming from SPARQL, while they will likely not be N3 experts, will be better able to understand and---if needed---tweak the N3 translation. 
To further facilitate such manual editing, as well as the exchange of translations on the Web,
we translate each query to exactly one N3 rule. 
Users can also
combine these rules with existing, ``native'' N3 rules from other sources to suit their needs (e.g., OWL2 RL \cite{fokoue2012owl}, OWL-P \cite{tomaszuk2018inference} and L2 \cite{fischer2010towards} inferencing). 



\section{Related Work}
\label{sec:rel_work}
Below, we review approaches that utilize SPARQL to perform reasoning.

\subsection{SPARQL Translation to Rule Languages}
\label{sec:sparql2rules}

Polleres~\cite{polleres2007} first translated SPARQL into Datalog, with negation as failure based on Answer Set Programming (ASP). 
It allows blank node-free \textsc{construct} queries to act as a recursive rule language over RDF.
Gottlob et al.~\cite{gottlob2015beyond} target general recursion in SPARQL by
developing TriQ-Lite 1.0 based on Warded $Datalog^{\exists,\neg sg,\bot}$.
The authors translate SPARQL queries under the OWL2 QL entailment regime into TriQ-Lite 1.0.
Angles et al.~\cite{arenas2018expressive} present SparqLog, a SPARQL translation engine built on Vadalog~\cite{bellomarini2018vadalog}. 
Similar to Gottlob et al., they rely on Warded Datalog, and both support OWL2 QL entailment. 
However, recursive querying is not directly supported; queries are translated using overlapping predicate names, and without referring to the results of other queries.

In general, we point out that Datalog does not follow SW principles, such as the use of IRIs for resource identification, and using triples as atomic formulas. Hence, translations are not easily exchangeable, and there is an impedance mismatch as RDF triples first need to be translated into predicates.

\subsection{Extensions to SPARQL Engines}
\label{sec:recurs_engines}


Reutter et al.~\cite{reutter2021} propose recSPARQL, which introduces a fixed-point operator, similar to the one from SQL:1999, into SPARQL. 
Using \textsc{construct} queries, a temporal graph is iteratively populated until reaching the least fixed point (forward reasoning). 
The authors focus on \textsc{construct} queries without blank nodes in the template part, and require them to have fixed points.
A restriction to linear recursion~\cite{reutter2021} means an iteration can only rely on the original data and triples from the prior iteration. The system was implemented on top of Apache Jena~\cite{Apache2021} (ARQ module). 
SPIN~\cite{knublauchSPIN} (SPARQL Inferencing Notation) allows representing rules and constraints in RDF using SPARQL.
The spinrdf~\cite{spinrdf} library supports reasoning with SPARQL \textsc{construct} queries, It also uses a fixed point (forward reasoning) algorithm and 
is also implemented on top of Apache Jena.

\sloppy
Hogan et al.~\cite{hogan2020recursive} introduce SPARQAL, which combines graph querying with analytical operations, including \textsc{do...while} loops.  
Corby et al.~\cite{corby2017ldscript} extend SPARQL with LDScript, an imperative language that introduces functions and loops. 
While these loops may allow for recursion on query results, it requires an imperative extension that we consider outside the scope of this paper. 
Atzori et al.~\cite{atzori2014} propose a SPARQL function for resolving recursive expressions.
\section{SPARQL}
\label{sec:sparql}
In this section, we give a short introduction to the main features of SPARQL. 
Our definitions follow the work of Pérez et al.~\cite{perez2009semantics} and extend them to accommodate the features introduced in SPARQL 1.1, drawing on related works~\cite{polleres2013relation,salas2022semantics}. A more detailed introduction can be found there.

\subsection{Syntax}
\label{sec:sp_syn}
Atomic terms in SPARQL are taken from the sets $I$, $B$, $L$, and $V$, representing IRIs, Blank Nodes, RDF literals, and Variables, respectively. A filter condition $R$ is a (logical) formula which depends on a (possibly empty) set of atomic terms.
The set of filter conditions contains the formulas $\top$ (true) and $\bot$ (false). 

We recursively define the set $GP$ of graph patterns as follows:

\vspace{0.15cm}
\noindent 
(1) If  $P \subset (I  \cup L \cup V) \times (I \cup V) \times (I  \cup L \cup V)$, then $P\in GP$.  We call $P$ a Basic Graph Pattern (BGP)\footnote{In practice, BGPs also allow blank nodes in subject and object position. We omit them here to keep the semantics concise. Note that our translation in Section \ref{sec:sparql-n3} also works if we allow blank nodes in BGPs as these are locally scoped in  N3.},
\\
(2) If $P, P_1, P_2 \in GP$ and $R$ is a filter condition, then: 
\begin{itemize}
\item	$\text{AND} (P_1, P_2), \text{UNION} (P_1,  P_2), \text{MINUS}(P_1, P_2)\in GP$ 
\item $\text{FE}(P_1, P_2), \text{FNE}(P_1, P_2)\in GP$,
\item $\text{OPT}(P_1, P_2, R) \in GP$,
\item  $\text{FILTER}(P, R)\in GP$.
\end{itemize}

\vspace{0.15cm}
Note that we define
 FILTER EXISTS (FE) and FILTER NOT EXISTS (FNE) as separate graph patterns even though they are special cases of the FILTER condition. This is because their evaluation depends on both the variable bindings and the graph the query is evaluated against, while all other filter conditions only depend on the former~\cite[Section 17.4.1.4]{harris2013sparql}. 
We still assume that the translation form concrete SPARQL syntax into the algebra \cite[Section 18.2]{harris2013sparql} treats FE and FNE as filters when producing the  algebra expression OPT for the OPTIONAL operator. Here, we also remind the reader that if there is no FILTER condition occuring on the second argument of an OPTIONAL expression in conrete syntax, $\top$
is added as filter argument during the translation.
%


\subsection{Semantics}
\label{sec:sparqlsemantics}
Let $T =I \cup B \cup L$ be the set of nonvariable terms. 
The semantics of SPARQL graph patterns rely on partial mappings from $V$ to $T$. 
The domain of a mapping $\mu$, denoted $\text{dom}(\mu)$, is the set of variables where $\mu$ is defined. Two mappings, $\mu_1$ and $\mu_2$, are compatible, written as as $\mu_1\sim \mu_2$, if, for any $x \in \text{dom}(\mu_1) \cap \text{dom}(\mu_2)$, $\mu_1(x) = \mu_2(x)$.  Below, we consider mappings as sets of pairs and therefore use the  union operator $\cup$ to combine them.

Given sets of mappings $\Omega_1$ and $\Omega_2$, we define:
\begin{itemize}
    \item $\Omega_1 \bowtie \Omega_2 = \{\mu_1 \cup \mu_2 \mid \mu_1 \in \Omega_1, \mu_2 \in \Omega_2 \text{ and } \mu_1\sim\mu_2\}$, 
    \item $\Omega_1 \cup \Omega_2 = \{\mu \mid \mu \in \Omega_1 \text{ or } \mu \in \Omega_2\}$,
    \item $\Omega_1 \setminus_M \Omega_2 = \{\mu \in \Omega_1 \mid \text{for all } \mu^\prime \in \Omega_2: \mu \not\sim\mu^\prime\text{ or } \text{dom}(\mu)\cap \text{dom}(\mu^\prime)=\emptyset  \}$.
\end{itemize}

We apply a mapping $\mu$ to a BGP or filter condition $Q$, denoted by $Q\mu$,
by replacing all variables $v\in \ dom(\mu)$ occurring in $Q$ by $\mu(v)$.  If $R\mu=\top$ for a filter condition $R$ and a mapping $\mu$, we also  write $\mu\models R$ 

\vspace{0.2cm}
\noindent \textbf{Evaluation of SPARQL Graph Patterns}
The evaluation of graph patterns over an RDF graph $G$ produces a set of mappings. The function $\llbracket P \rrbracket_G$ denotes the evaluation of a graph pattern $P$ on $G$, and is defined as follows:
\begin{enumerate}
	\item[(1)] If $P$ is a BGP, then $\llbracket P \rrbracket_G = \{\mu \ | \ dom(\mu) = var(P) \text{ and } P\mu \subseteq G\}$, 
	\item[(2)] If $P$ is $\text{AND}(P_1, P_2)$, then $\llbracket P \rrbracket_G = \llbracket P_1 \rrbracket_G \bowtie \llbracket P_2 \rrbracket_G$,
	\item[(3)]
 If $P$ is $\text{OPT}(P_1, P_2, R)$,  then 
 \begin{align*}
 	\llbracket P \rrbracket_G = & \{ \mu_1 \cup \mu_2| \ \mu_1 \in \llbracket P_1 \rrbracket_G,
 	\mu_2 \in \llbracket P_2 \rrbracket_G, \mu_1 \sim \mu_2 \text{ and }  \mu_1 \cup \mu_2 \models R \}\\
 	&\cup \{\mu_1 \in  \llbracket P_1 \rrbracket_G|\ \nexists \mu_2 \in \llbracket P_2 \rrbracket_G,\mu_1 \sim \mu_2 \text{ and } \mu_1\cup \mu_2\models R\},\end{align*}
	\item[(4)] If $P$ is $\text{UNION }(P_1, P_2)$, then $\llbracket P \rrbracket_G = \llbracket P_1 \rrbracket_G \cup \llbracket P_2 \rrbracket_G$,
	\item[(5)] If $P$ is $\text{MINUS }(P_1,P_2)$, then $\llbracket P \rrbracket_G = \llbracket P_1 \rrbracket_G \setminus_M \llbracket P_2 \rrbracket_G$,
    \item[(6)] If $P$ is $\text{FE }(P_1,P_2)$, then
   $\llbracket P \rrbracket_G = \{\mu \in \llbracket P_1 \rrbracket_G \ | \ \llbracket P_2\mu \rrbracket_G\neq \emptyset\}$,\footnote{Note that in our definition,  the  term $\llbracket P_2\mu \rrbracket_G$ in the evaluation of FE and FNE could be undefined as $\mu$ could map variables occurring in  $P_2$ to blank nodes which we excluded. In that case we treat the blank nodes as constants. Here, we deviate from
   SPARQL~1.1~\cite{harris2013sparql} to address the known issues related to the \textsc{exists} pattern \cite{sparqlex}.}
    \item[(7)] If $P$ is $\text{FNE }(P_1, P_2)$, then $\llbracket P \rrbracket_G = \{\mu \in \llbracket P_1 \rrbracket_G \ |\ \llbracket P_2\mu \rrbracket_G=\emptyset\}$,
    \item[(8)] If $P$ is $\text{FILTER }(P_1,R)$, then $\llbracket P \rrbracket_G = \{\mu \in \llbracket P_1 \rrbracket_G \ | \ \mu \vDash R\}$.
\end{enumerate}

The domain of a solution mapping is called its scope \cite[18.2.1]{sparqlspec}. 
As the definitions later in this paper will rely on it, we formally introduce it and 
 define the function $sv$ which retrieves the scope of each query pattern.

Let
$var: GP\rightarrow 2^V$ be a function which maps a graph pattern to the set of variables occurring in it. 
The function $sv: GP\rightarrow 2^V$ 
is defined as follows:
\begin{enumerate}
	\item[(1)] If $P$ is a BGP $t$, then $sv(P)=var(t)$,
	\item[(2)] If $P =X(P_1,P_2)$, with $X \in \{\text{AND}, \text{UNION}\}$ then $sv(P)=sv(P_1)\cup sv(P_2)$,
	\item[(3)] If $P=\text{OPT}(P_1, P_2,R)$, then $sv(P)=sv(P_1)\cup sv(P_2)$
	\item[(4)] If $P=X(P_1 ,P_2)$, with $X \in \{\text{MINUS}, \text{FE},\text{FNE}\}$ then $sv(P)=sv(P_1)$,
	\item[(5)] If $P=\text{FILTER } (P_1 ,R)$, then $sv(P)=sv(P_1)$.
\end{enumerate}

\vspace{0.2cm}
\textbf{Evaluation of SPARQL Queries}\label{sec:sparql_sem}
SPARQL queries normally also come with a query form, that is, a query is written as  
\[
(n~x) \text{ WHERE } P
\]
%
where $P$ is a graph pattern and the pair $(n, x)$, called query form, consists of a name
$n \in \{\textsc{select},\textsc{construct}\}$, and argument $x$. The latter is either a list of variables or an  asterisk if $n=\textsc{select}$, or a pattern $Q \subset (I  \cup L \cup V\cup B) \times (I \cup V) \times (I  \cup L \cup V\cup B)$, if $n=\textsc{construct}$.  For all  $\mu \in \llbracket P \rrbracket_G$ the evaluation of \textsc{select} retrieves the variable bindings, the evaluation of \textsc{construct} retrieves instantiated version $Q\mu$ of $Q$,
where $Q\mu$ stands for the pattern retrieved by replacing each $x$ occurring in $Q$ by $\mu(x)$ and each blank node $b$  by a blank node $b_\mu$ such that $b_\mu$ does not occur in $G$ and $b_\mu\neq b'_\mu$ if $b\neq b'$. 

With the current definition, it can happen, that $\mu$ leaves variables in $Q$ unbound.  This is the case if $Q$ contains "new" variables, that is  if $var(Q)\setminus sv(P)\neq \emptyset$, or if $Q$ contains variables which are instantiated through an \text{OPT} pattern which sometimes does not retrieve a result. For the former case, the SPARQL query does not lead to a result - for the remainder of the paper we exclude this case from our considerations\footnote{Note that we can easily check before evaluation whether $var(Q)\setminus sv(P)= \emptyset$.} - in the latter case, we assume that unbound variables will be replaced by a term indicating their unboudness (in our concrete implementation that will be a special kind of blank node).

\section{Reasoning in N3}
\label{sec:n3}
%
This section focuses on the aspects of Notation3 Logic relevant for this paper, and refers 
to the full specification provided by the W3C community group \cite{woensel2023n3}.

\subsection{Syntax}
The syntax of SPARQL and N3 is closely related (Section \ref{sec:bground}). In fact, BGPs can be directly used as premises and conclusions of N3 rules, but N3 allows for extra patterns. 
We thus define N3 Graph Pattern (N3GP) as an extension of BGP.

\label{N3GP}
Given the sets $I$, $B$, $L$, and $V$, representing IRIs, Blank Nodes, RDF literals, and Variables, respectively, we 
define the set $T_{N3}:=AT\cup GT\cup LT$ of N3 terms:
\begin{itemize}
\item  $AT:=I\cup B\cup L \cup V$ is the set of \emph{atomic terms},
    \item  $GT:= \{ G \; | \; G \in N3GP \}$ is the set of \emph{graph terms} (N3GP is defined below),
    \item  $LT:=\{( t_1, \ldots, t_n) \; | \; n\geq 0, t_1,\ldots,t_n\in T_{N3}\}$ is  the set of \emph{list terms}.
\end{itemize}
We call a tuple of form $T_{N3} \times T_{N3} \times T_{N3}$ an \emph{N3 triple}. 
An \emph{N3 graph pattern} (N3GP) is a set of N3 triples. Note that $BGP\subset N3GP$. 


We call graph terms \emph{closed} if they, and their possibly nested subgraphs, do not contain variables, and denote their set by $GT_c$. 
If lists, and their possible nested sublists, only contain elements of $I\cup L\cup GT_c$, we call them \emph{closed}. We denote the set of closed lists by  $LT_c$ and define the set of closed terms as $T_c:=I\cup L\cup GT_c\cup L_c$. We call a set of \emph{closed} triples  $G\subset T_c\times T_c\times T_c$ a $\emph{closed}$ graph.

\subsection{Semantics}\label{sec:N3semantics}
As the syntax reveals, N3 is rather expressive as it supports lists and graph terms. 
We present a simplified version of its semantics, focusing on the aspects relevant to this work (see the Community Group Report \cite{arndt2023n3semantics} for more details).

\noindent \subsubsection{Base semantics}
N3 semantics extends RDF simple entailment \cite{RDFsemantics}.  
Given a closed N3 graph, N3's basic interpretation function $\mathfrak{I}$ maps closed 
terms into the domain of discourse. The domain of discourse contains normalised syntactic representations of graph terms and canonical representations of lists, and $\mathfrak{I}$ maps graph terms and lists to their representations. For all other ground terms, it corresponds to RDF's interpretation function $I$. Relations are interpreted using RDF's function $IEXT$ and a triple $\langle t_1, t_2, t_3\rangle$  is true iff $(\mathfrak{I}(t_1),\mathfrak{I}(t_3))\in IEXT(\mathfrak{I}(t_2)) $. A closed graph is true iff all its triples are true. 

Closed triples are those which are not (implicitely) quantified on the graph level. They are the N3 counterpart of the RDF graphs containing blank nodes. To interpret non-closed N3 graph patterns, we rely on mappings for blank nodes and variables. 
In N3, these mappings can be seen as 
a mixture of RDF's mapping function $A$, which maps blank nodes to the domain of discourse, and the variable mappings found in the SPARQL semantics (Section~\ref{sec:sparql_sem}). 
We use mappings of the form $A: X\rightarrow D\times T_c$, where $X\subseteq B$ or $X\subseteq V$, and  $D$ is the domain of discourse and $T_c$ as defined before. 
For each $A(x)=(A_1(x),A_2(x))$, we have $\mathfrak{I}(A_2(x))=A_1(x)$.
If a variable is found in a graph term, or any of its nested subgraphs, then it is first grounded using $A_2(x)$ and then the resulting closed graph is interpreted.
Otherwise, $A_1$ is used to directly map the variable into the domain of discourse.
Variables are universally quantified and blank nodes are existentially quantified. With some exceptions we will discuss later the scope of variables is global, the scope of blank nodes is local to the graph term. If a triple or graph $g$ is true under $\mathfrak{I}$, we also write $\mathfrak{I}\models_b g$.
We further write $\mathfrak{I}[A]$ for the interpretation that uses a mapping $A$ to interpret variables or blank nodes. 


 \sloppy
 \sloppy
As an example, consider rule~\ref{n3rule}. The implication symbol \texttt{=>} is syntactic sugar for the predicate 
\texttt{log:implies} which we abbreviate by "$\impl$":
$\langle \langle ?x, \type, \Researcher\rangle, \impl,   \langle ?x, \type, \Person\rangle \rangle$.  For base semantics, the formula forms a triple with graph terms, which is  true iff for all mappings $A$ as above, $( \mathfrak{I}(\langle A_2(?x), \type, \Researcher\rangle), \mathfrak{I}(\langle A_2(?x), \type, \Person\rangle)) \in IEXT(\mathfrak{I}(\impl)).$

\subsubsection{Log Semantics}\label{builtin}
Logical operations like $\impl$ in N3 are expressed by built-in predicates, whose special meaning is not covered by the base semantics. For the logical built-in predicates which we collect in the set $LP$ we define the log semantics as an extension of base semantics.

We say that an interpretation $\mathfrak{I}$ is log model for a graph  $g$, written as $\mathfrak{I}\models_l g$, 
iff $\mathfrak{I}\models_b g$ and for each triple $\langle s, p, o\rangle\in g$ with $p\in LP$, the triple fulfills extra conditions which depend on the particular $p$.  
We give these conditions for the predicate $\impl$. 
$\mathfrak{I}\models_l \langle s, \impl, o \rangle$ 
iff 
(\textbf{1})
$\mathfrak{I}\models_b\langle s, \impl, o \rangle$, and 
(\textbf{2}) if for $A:V\rightarrow D\times T_c $ 
it holds that $\mathfrak{I}[A]\models_b \langle s, \impl, o \rangle$ 
and $\mathfrak{I}[A](s),\mathfrak{I}[A](o)\in GT$ then $\mathfrak{I}[A]\models_l o$ if  $\mathfrak{I}[A]\models_l s$.

To  illustrate the definition, we come back to the interpretation of rule~\ref{n3rule}
and show how we can use it to derive triple \ref{john2} from triple \ref{john}.
%
%
%
%
Let us assume that $\mathfrak{I}\models_l \{\langle john, \type, \Researcher\rangle, \langle \langle x, \type, \Researcher\rangle, \impl,   \langle x, \type, \Person\rangle \rangle\}$ and that $\mathfrak{I}(john)=j$. 
Then 
$\mathfrak{I}[A]\models_b \langle \langle x, \type, \Researcher\rangle, \impl,   \langle x, \type, \Person\rangle \rangle$ holds for all
mappings $A$ (according to (1)), 
and thus also for $A'$ such that
$A'(x)=(j, \text{john})$. 
As $\mathfrak{I}[A'](\langle x, \type, \Researcher\rangle), \mathfrak{I}(A')(\langle x, \type, \Person\rangle) \in GT$ it furthermore holds that if 
$\mathfrak{I}[A']\models_l \langle x, \type, \Researcher\rangle$ then $\mathfrak{I}[A']\models_l \langle x, \type, \Person\rangle$ (according to (2)). As 
$\mathfrak{I}\models_l \langle john, \type, \Researcher \rangle$
and $A_1'(x)=\mathfrak{I}(A_2'(x))=\mathfrak{I}(john)=j$, we get that
$\mathfrak{I}\models_l \langle john, \type, \Person\rangle$.

 

\sloppy
The semantics of the other built-in predicates is defined in a similar way. 
We only list those which are relevant for this work.
For the predicate \texttt{log:includes} (or ``$\includes$'' for brevity) 
we have that $\mathfrak{I}\models_l\langle s, \includes, o \rangle$ if  $\mathfrak{I}\models_b\langle s, \includes, o \rangle$ and if  for $A:V\rightarrow D\times T_c $  it holds that $\mathfrak{I}[A]\models_b\langle s, \includes, o \rangle$
and 
$\mathfrak{I}[A](s),\mathfrak{I}[A](o)\in GT$ 
then there exists a mapping  $\mu$ from the blank nodes occurring in  $o$  to $T_C\cup B$ such that  $\mathfrak{I}[A](o)\mu \subseteq \mathfrak{I}[A](s)$.
The predicate has one peculiarity: if $\langle s, \includes, o \rangle$ occurs in an asserted triple and $s$ or $o$ are graph terms containing variables, then these variables are treated as blank nodes. If  the triple occurs in 
 the premise of a rule and $s$ or $o$ are graph terms containing variables which do not occur anywhere else in the same rule, then these variables are  as well treated as locally existentially quantified. The base interpretation will map the graph term containing these local variables to a normalised representation where these are replaced by blank nodes. 
For $\includes$, we have a special case if the subject of the triple is not specified.  
 If we write $\langle \_, \includes, o\rangle$\footnote{In concrete syntax, we would use an arbitrary variable.}, subject gets instantiated by 
the deductive closure $close(F)$ of the input graph $F$, that is by a graph term containing all triples which can be derive from $F$ (note that in N3 the input graph also includes the rules). 
$ \mathfrak{I}\models_l \langle \_, \includes, o\rangle$
if  $\mathfrak{I}\models_b\langle close(F), \includes, o \rangle$ and if  for $A:V\rightarrow D\times T_c $  it holds that $\mathfrak{I}[A]\models_b\langle close(F), \includes, o \rangle$
and 
$\mathfrak{I}[A](o)\in GT$ 
then there exists a mapping  $\mu$ from the blank nodes occurring in  $o$  to $T_C\cup B$ such that  $\mathfrak{I}[A](o)\mu \subseteq close(F)$.\footnote{Note that N3 does normally not allow for infinite graph terms.}
In short that means that $\mathfrak{I}[A](o)$ can be derived from $F$.
  $\mathfrak{I}\models_l \langle s, \notIncludes, o \rangle$ iff $\mathfrak{I}\not\models_l \langle s, \includes, o \rangle$.
N3 furthermore contains predicates to produce a copy of a graph term with renamed variables (\texttt{log:copy}, "$\cop$" for brevity), e.g., $\{\langle ?x, p, o \rangle\}$ becomes $\{\langle ?x_{new} , p, o \rangle\}$ and to produce a conjunction, i.e. the union, of two graphs (\texttt{log:conjunction}, "$\conj$" for brevity).

\subsection{Forward and backward reasoning}
\label{sec:fvb}
Reasoners implement the semantics presented above using different execution strategies; 
some reasoners (notably EYE \cite{eyepaper}) even enable the user to specify how rules will be applied.
Forward reasoning presents a bottom-up approach: given a set of initial facts, the reasoner applies rules to infer new conclusions; these are then used as extra input into the rules, leading to more conclusions. 
This process continues until no other conclusions can be drawn.
Backward reasoning is a top-down approach that begins with a given query, such as {\small \verb|?x a :Person|}. The reasoner searches for rules with conclusions that may satisfy the query, and then checks whether the premises of those rules hold, by 
searching for matching facts or other rules with matching conclusions. Here, the process ends if all premises can be matched to facts.
While the execution strategies clearly differ, 
both can be considered equivalent in terms of expressivity.
Our evaluation illustrates how backward reasoning can have a distinct benefit in case of large search spaces. 

\section{From SPARQL to N3}
\label{sec:sparql2n3}
In this section, we define the actual mapping between SPARQL and N3 following the idea presented in Section~\ref{sec:bground}. 
Provided with a query 
\[
(n~x) \text{ WHERE } P \] 
we define a mapping $m$ from SPARQL graph patterns to N3 graph patterns (Section \ref{sec:premise}) and a mapping $h$ from the whole query to N3 graph patterns (Section \ref{sec:consequence}) such that our translated rule is
\[
m(P)\rightarrow h((n~x), P)
\]
With this setup, we ensure that each SPARQL query is represented by one single N3 rule. 
However, this requires the direct translation of SPARQL concepts to equivalent N3 constructs, which cannot be done for \textsc{union} and \textsc{optional}. 
To implement these concepts as N3-native constructs, we require an extra ``runtime'' set of rules.
That is, the translated queries $m(P)\rightarrow h((n~x), P)$ always need to be combined with a fixed set of rules. We explain these rules in Section~\ref{sec:n3_special}. 
 


\subsection{From SPARQL graph patterns to N3 premises}\label{sec:premise}
Before proceeding, 
we need to define an auxiliary function to cope  with the MINUS operator.  
In the SPARQL semantics (Section \ref{sec:sparqlsemantics}) this is evaluated as the mapping-difference between the sets of mappings $\Omega_1 \setminus_M \Omega_2 = \{\mu \in \Omega_1 \mid \text{for all } \mu^\prime \in \Omega_2: \mu \not\sim\mu^\prime\text{ or } \text{dom}(\mu)\cap \text{dom}(\mu^\prime)=\emptyset  \}$. For our translation, that means that we have to cover two cases: (M1) the solution mappings are not compatible and (M2) the domains of the solution mappings are disjoint. 


Regarding MINUS, case (M2) can be easily checked without even knowing the data graph; given $\text{MINUS}(P_1,P_2)$, if the scopes $sv(P_1)$ and $sv(P_2)$ are disjoint, then we can simply proceed with $P_1$.  If they are not disjoint, we need to distinguish between shared and local variables (case (M1)).


\label{sec:sparql-n3}
\begin{lstlisting}[caption={Example query demonstrating the behavior of variable scoping for \textsc{minus} with a nested \textsc{filter exists}.}, label={lst:minus-example}, float=t]
SELECT * WHERE { 
  ?x :p ?n . MINUS { ?x :q ?m . FILTER EXISTS {?m :r ?n}}}
\end{lstlisting}

To illustrate the complexity of MINUS, we provide an example. 
In Listing~\ref{lst:minus-example}  we display a MINUS query with a nested FILTER EXISTS query in its second argument. In our notation from above, the query pattern can be written as $P=\text{MINUS}(\{\langle?x, p, ?n\rangle\}, \text{FE}(\{\langle ?x,q,?m\rangle,\langle ?m, r, ?n\rangle\}))$.
As the domains of $P_1$ and $P_2$ are not disjoint, case (M2) from above is not relevant.
To check case (M1), we evaluate the query on graph 
$G= \{\langle s, p, o\rangle,\langle s, q, a  \rangle, \langle a, r, b \rangle\}$.
The first argument $P_1= \{\langle?x,p,?n  \rangle\}$ 
of MINUS matches the first triple from $G$ and we get the solution mapping $\llbracket P_1 \rrbracket_G=\{\{(?x, s), (?n, o) \}\}$. To evaluate the second argument, we first evaluate $P_{2,1}=\{\langle ?x, q, ?m\rangle\}$ and retrieve $\llbracket P_{2,2} \rrbracket_G=\{\{(?x, s), (?m, a) \}\}$. 
Then, we apply the mapping $\mu=\{(?x, s), (?m, a) \}$ 
on $P_{2,2}=\{\langle ?m, r, ?n \rangle\}$, i.e., $P_{2,2}\mu=\{\langle a, r, ?n \rangle\}$. 
Next, when evaluating $P_{2,2\mu}$ on the data, triple $\langle a, r, b\rangle$ allows us to unify $?n$ with $b$, yielding a non-empty solution mapping for this expression and thus
$\llbracket P_2 \rrbracket_G=\{\{(?x,s), (?m,a) \}\}$. As the solution mappings of $P_1$ and $P_2$ are compatible, we get $\llbracket P_{2,2} \rrbracket_G=\emptyset$.

We observe that, during evaluation, two different values were assigned to the variable $?n$: once with $o$ while evaluating $P_1$, and once with $b$ while evaluating $P_{2,2}\mu$.
This is not a problem in SPARQL, as no actual binding of variables occurs in the second argument of FE (here, $P_{2,2}\mu$), and there is thus no conflict.
However, we aim to translate one SPARQL query to one single N3 rule, and N3 does not allow two different bindings for the same variable in the same premise (the scope of variables is global in N3; Section \ref{sec:N3semantics}). We thus have to relabel the second occurrence of variable $?n$. 
%
%
%
%
%
We do that with the function $rl: N3GP \times 2^V\rightarrow N3GP$ that relabels all variables found in the first argument, \textit{except} those given as the second argument, by ``fresh'' variables that neither occur in the first nor second argument.
An example relabeling for $P_{2,2}$ is 
$
rl(P_{2,2},\{?x, ?m\})=\{\langle ?m, r, ?n_{new}\rangle \}
$. Note that this relabeling does not influence the result of the query pattern as $P'=\text{MINUS}(\{\langle?x, p, ?n\rangle\}, \text{FE}(\{\langle ?x,q,?m\},\{?m, r, ?n_{new}\}))$ retrieves the exact same bindings as $P$.

We use scoping and relabeling functions to define our translation function from SPARQL query patterns to N3 graph patterns.
%
%
Let $Q$ be a graph pattern and let $ft$ a function which translates a filter expression to its N3GP translation. We define the premise mapping $m:GP\rightarrow N3P$ recursively as follows:
\begin{itemize}
\item if $Q$ is a BGP $P$, then $m(Q)= P$,
\item if $Q$ is $\text{AND}(Q_1, Q_2)$, then 
$m(Q)= m(Q_1)\cup m(Q_2)$,
 \item if $Q$ is $\text{OPT}(Q_1,Q_2,R)$, then $m(Q)=\{ \langle m(Q_1), \text{opt}, (m(Q_2)\cup ft(R))\rangle \}$,
\item if $Q$ is $\text{UNION}(Q_1,Q_2)$, then $m(Q)=\{ \langle m(Q_1), \text{union}, m(Q_2)\rangle \}$,
\item if $Q$ is $\text{MINUS}(Q_1,Q_2)$, 
then 
\[m(Q)=\begin{cases}
 m(Q_1), \text{ if } sv(Q_2)\cap sv(Q_1)= \emptyset\\
 m(Q_1)\cup \{\langle \_, \notIncludes, rl(m(Q_2), (sv(Q_2)\cap sv(Q_1) ))\rangle 
 \}, \text{ else},
 \end{cases}
 \]
\item if $Q$ is $\text{FNE}(Q_1, Q_2)$, then $m(Q)=m(Q_1) \cup \{\langle \_, \notIncludes, m(Q_2)\rangle\}$,
\item if $Q$ is $\text{FE}(Q_1,Q_2)$, then $m(Q)= m(Q_1)\cup \{\langle \_, \includes, m(Q_2)\rangle\}$,
\item if $Q$ is $\text{FILTER} (Q_1, R)$, then $m(Q)=m(Q_1)\cup ft(R)$.
\end{itemize}

We do not detail the function $ft$ provided in the above definition because the exact translation depends on the filter element. In most cases, there exist N3 built-in functions that behave identically to SPARQL's filter functions. E.g., the filter expression \texttt{?m < ?n} becomes the N3 triple \texttt{?m math:lessThan ?n.}

Note that the translation simply converts \textsc{union} and \textsc{optional} patterns to corresponding N3 triples with custom ``union'' and ``optional'' predicates.
For these two predicates, we define our run-time rules 
(Section~\ref{sec:n3_special}).

For the \textsc{minus}, note that we relabel all variables in $Q_2$, \textit{except} for those that occur in the scope of $Q_1$ and $Q_2$. Intuitively, we thus only relabel variables that are \textit{not} relevant to the outside, and should thus not conflict with ``outside'' variables.
Coming back to our example from Listing~\ref{lst:minus-example}, 
the variables excluded from relabeling are $sv(\langle ?x, p, ?n\rangle) \cap 
sv(\text{FE}(\{\langle ?x, p, ?m\rangle, \langle ?x, q, ?n\rangle\})) = \{?x, ?p\}\cap\{?x, ?m\}=\{?x\}$, leading to the application of the relabeling function $rl(m(Q_2), \{?x\})$, i.e., all variables except $?x$  are relabeled. This ultimately leads to
$m(P)= \{ \langle ?x, p, ?n \rangle, \langle \_ ,\notIncludes ,\{\langle ?x, q, ?m_{new}\rangle, \langle \_, \includes \langle ?m_{new}, r, ?n_{new}\rangle \rangle \}\rangle\}$, which solves the issue of multiple bindings per variable in N3.

It still remains to show that the mapping we defined works correctly. We do that for RDF ground graphs which do not contain any built-in IRI of N3. We call these graphs proper RDF graphs,  We limit our consideration to ground graphs to avoid extra difficulties with N3's local scoping. Otherwise we could not consider the rule premise in isolation.
As the translation function for the filter is out of scope in this paper and the runtime rules for the predicates "opt" and "union" will only be discussed in Section \ref{sec:n3_special},  we rely on the following assumption:

\textbf{Assumption} Given an RDF ground graph $G$ for each mapping $\mu:X\rightarrow  I \cup L$ we have that $\mu \models R$ iff $G\models_l ft(R)\mu$,  $G\models_l m(\textsc{opt}(P_1, P_2, R))\mu$ iff $\mu\in  \llbracket \textsc{opt}(P_1, P_2, R) \rrbracket_G$, and  $G\models_l m(\textsc{union}(P_1, P_2))\mu$ iff $\mu\in  \llbracket \textsc{union}(P_1, P_2) \rrbracket_G$, 

Under the above assumption we show:

\begin{lemma}\label{lemma:1}
	Given a proper RDF graph   $G$, a SPARQL graph pattern $P$, and a mapping $\mu: sv(P) \rightarrow I \cup L$, then the following holds:
	\[\mu\in \llbracket P \rrbracket_G \text{  iff  }  G\models_l m(P)\mu.\]
\end{lemma}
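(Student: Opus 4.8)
The plan is to argue by structural induction on the SPARQL graph pattern $P$, following the recursive clauses that define $m$. Two preliminary observations will carry most of the bookkeeping. First, because $G$ is a proper RDF ground graph — no blank nodes, no N3 built-in IRIs, and, since it carries no rules, $close(G)=G$ — for every ground set of triples $t$ the three statements $t\subseteq G$, $G\models_b t$ and $G\models_l t$ coincide. Second, for every pattern $Q$ the instantiation $m(Q)\mu$ depends only on the restriction $\mu|_{sv(Q)}$: the variables of $m(Q)$ lying outside $sv(Q)$ are exactly the fresh/non-shared variables sitting inside an $\includes$ or $\notIncludes$ graph term, which N3 quantifies locally and which $\mu$ therefore does not reach. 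Granting these, the base case $P=t$ a BGP is immediate: $\mu$ has domain $sv(t)=var(t)$, so the SPARQL clause gives $\mu\in\llbracket t\rrbracket_G$ iff $t\mu\subseteq G$, and since $m(t)=t$ this is the coincidence above.

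For the compound cases I would split along the clauses of $m$. The $\textsc{union}$ and $\textsc{opt}$ cases are discharged directly by the stated Assumption, which is exactly the required equivalence for $m(\textsc{union}(P_1,P_2))$ and $m(\textsc{opt}(P_1,P_2,R))$. For $\text{AND}(Q_1,Q_2)$, writing $\mu_i=\mu|_{sv(Q_i)}$, the restriction observation gives $G\models_l(m(Q_1)\cup m(Q_2))\mu$ iff $G\models_l m(Q_i)\mu_i$ for $i=1,2$, hence by the induction hypothesis iff $\mu_i\in\llbracket Q_i\rrbracket_G$; since $\mu_1$ and $\mu_2$ are both restrictions of $\mu$ they are compatible with union $\mu=\mu_1\cup\mu_2$, which is precisely $\mu\in\llbracket Q_1\rrbracket_G\bowtie\llbracket Q_2\rrbracket_G$. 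For $\text{FILTER}(Q_1,R)$, the right-hand side splits as $G\models_l m(Q_1)\mu$ together with $G\models_l ft(R)\mu$, which by the induction hypothesis and the Assumption respectively is $\mu\in\llbracket Q_1\rrbracket_G$ and $\mu\models R$, i.e. $\mu\in\llbracket\text{FILTER}(Q_1,R)\rrbracket_G$.

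The load-bearing cases are $\text{FE}$, $\text{FNE}$ and $\text{MINUS}$, where N3's local existential quantification inside the $\includes$/$\notIncludes$ graph term has to be matched with SPARQL's existential sub-evaluation. For $\text{FE}(Q_1,Q_2)$, $m$ yields $m(Q_1)\cup\{\langle\_,\includes,m(Q_2)\rangle\}$; after applying $\mu$ (whose domain is $sv(Q_1)=sv(\text{FE}(Q_1,Q_2))$) the shared variables of $m(Q_2)$ are bound by $\mu$ while the remaining ones are, by N3's treatment of $\includes$ in a rule premise, treated as local existentials. Unfolding the log-semantics of $\langle\_,\includes,o\rangle$ with unspecified subject and using $close(G)=G$, the $\includes$ conjunct holds iff there is a mapping $\nu$ of the remaining variables (into $I\cup L$, since $G$ is ground and, by the paper's convention, blank-node targets collapse to constants) with $G\models_l m(Q_2)(\mu\cup\nu)$; by the induction hypothesis applied to $Q_2$ with the combined mapping on $sv(Q_2)$, this is exactly $\llbracket Q_2\mu\rrbracket_G\neq\emptyset$. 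Together with the $Q_1$ conjunct this is the SPARQL clause for $\text{FE}$, and $\text{FNE}$ is its literal negation via $\mathfrak I\models_l\langle s,\notIncludes,o\rangle$ iff $\mathfrak I\not\models_l\langle s,\includes,o\rangle$. For $\text{MINUS}(Q_1,Q_2)$, if $sv(Q_1)\cap sv(Q_2)=\emptyset$ then $\setminus_M$ leaves $\llbracket Q_1\rrbracket_G$ unchanged (the domain-disjointness disjunct always applies) and $m(Q)=m(Q_1)$, so the claim is just the induction hypothesis for $Q_1$; otherwise, with $S=sv(Q_1)\cap sv(Q_2)$, I unfold $\notIncludes$ applied to $rl(m(Q_2),S)$, observe that $rl$ only renames the non-$S$ variables and hence does not affect whether an extension exists, and conclude that the $\notIncludes$ conjunct holds iff there is no $\nu\in\llbracket Q_2\rrbracket_G$ with $\nu|_S=\mu|_S$ — equivalently, since $dom(\mu)\cap dom(\nu)=S\neq\emptyset$, no $\nu\in\llbracket Q_2\rrbracket_G$ compatible with $\mu$ — which together with $\mu\in\llbracket Q_1\rrbracket_G$ is exactly $\mu\in\llbracket Q_1\rrbracket_G\setminus_M\llbracket Q_2\rrbracket_G$.

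The main obstacle I anticipate is the scoping bookkeeping in the last paragraph: pinning down precisely which variables inside an $\includes$/$\notIncludes$ graph term count as global (and so are instantiated by $\mu$) versus local existentials — including the case of nested $\text{FE}/\text{FNE}/\text{MINUS}$, where $m$ produces a graph term that itself contains built-in triples and the reading of $close$ matters — justifying that the relabeling $rl$ is semantically inert in the $\text{MINUS}$ case, and reconciling the two flavours of ``mapping'', namely N3's $A\colon X\to D\times T_c$ (which may target blank nodes and must factor through $\mathfrak I$) against SPARQL's $\mu$ into $I\cup L$. This is exactly where the ``proper RDF ground graph'' hypothesis and the paper's convention on blank-node targets of $\includes$ do the real work, so I would state the two preliminary observations — that $m(Q)\mu$ depends only on $\mu|_{sv(Q)}$, and that $close(G)=G$ with the collapse of the three entailment notions — explicitly at the outset, keeping the compound cases short.
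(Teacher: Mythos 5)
Your plan follows essentially the same route as the paper's own proof: structural induction on $P$, the BGP base case via simple entailment on proper RDF graphs, \textsc{and} handled through restrictions of $\mu$, \textsc{fe}/\textsc{fne}/\textsc{minus} handled through the local scoping of $\includes$/$\notIncludes$ (with the same two-case split and relabeling argument for \textsc{minus}), and \textsc{filter}/\textsc{union}/\textsc{opt} discharged by the stated Assumption, together with the same preliminary observation that local scoping leaves no free variables in $m(P)\mu$. The only real difference is presentational: you unfold the $\notIncludes$ semantics directly where the paper argues the non-disjoint \textsc{minus} (and \textsc{fne}) case by contradiction, so the substance coincides.
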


\begin{proof}
We first note that the query patterns form \textsc{minus}, \textsc{fne} and \textsc{fe} rely on the predicates $\includes$ and $\notIncludes$ which are subject to local variable  scoping. These are exactly the constructs considered in this proof which take multiple patterns as arguments and only return  mappings their first argument. Therefore there are no free variables in the terms $m(P)\mu$ if $\mu$ is defined on the scope of $P$. Below, we therefore do not have to consider the case that $m(P)\mu$ contains free variables.
We show the claim by induction over the structure of $P$:
\begin{itemize}
\item	Let $P$ be a BGP:
\begin{description}
	\item["$\Rightarrow$"] Let $\mu\in \llbracket P \rrbracket_G$  and let $\mathfrak{I}\models_l G$.  As $P$ is a BGP, we have  $m(P)=P$ and $P\mu\subseteq G$, thus $\mathfrak{I}\models_l  m(P)\mu$. 
\item["$\Leftarrow$"] Let $G\models_l m(P)\mu$.  As $G$ and $P\mu$ are proper RDF graphs, $P\mu$ must be an instance of $G$ (for plain RDF graphs N3 logical entailment behaves like RDF's simple entailment).  We thus get that $P\mu\subset G$ and thereby
$\mu \in \llbracket P \rrbracket_G$.
	\end{description}
\item Let $P$ be of the form $\textsc{AND}(P_1,P_2)$:
\begin{description}
\item["$\Rightarrow$"]	
	Let  $\mu\in \llbracket P \rrbracket_G$  and let $\mathfrak{I}\models_l G$. 
	We have that $m(P)=P_1\cup P_2$ and $\llbracket P \rrbracket_G = \llbracket P_1 \rrbracket_G \bowtie \llbracket P_2 \rrbracket_G$. As $\mu|_{sv(P_1)}\in \llbracket P_1 \rrbracket_G$ and $\mu|_{sv(P_2)}\in \llbracket P_2 \rrbracket_G$ we get by the induction hypothesis that 
	$\mathfrak{I}\models_l m(P_1)\mu$ and 	$\mathfrak{I}\models_l m(P_2)\mu$ and thus $\mathfrak{I}\models_l m(P)\mu$.
\item["$\Leftarrow$"]	Let $G\models_l m(P)\mu$, and let $\mathfrak{I}\models_l G$.  As $m(P)=m(P_1)\cup m(P_2)$, 
we can define $\mu_1:= \mu|_{sv(P_1)}$ and $\mu_2:= \mu|_{sv(P_2)}$.  These two mappings are compatible and because of
$\mathfrak{I}\models_l m(P_1)\mu_1$ and $\mathfrak{I}\models_l m(P_2)\mu_2$, we get by the induction hypothesis that
 $\mu_1\in \llbracket P_1 \rrbracket_G$ and $\mu_2\in \llbracket P_2 \rrbracket_G$.
\end{description}	
\item	Let $P$ be of the form $\textsc{MINUS}(P_1,P_2)$:

	%
	%
 
	\begin{description}
			\item["$\Rightarrow$"]
			Let  $\mu\in \llbracket P \rrbracket_G$  and let $\mathfrak{I}\models_l G$. 
			and $\mu\in  \llbracket P_1 \rrbracket_G\setminus_M  \llbracket P_2 \rrbracket_G $. The latter means that
			$\mu \in  \llbracket P_1 \rrbracket_G$  and for all solution mappings $\mu^\prime\in \llbracket P_2 \rrbracket_G$ we either have that $\text{dom}(\mu)\cap \text{dom}(\mu^\prime)=\emptyset$ or $\mu \not\sim\mu^\prime$. 
			We consider these two cases separately
			
	\begin{itemize}
		\item If $\text{dom}(\mu)\cap \text{dom}(\mu^\prime)=\emptyset$  for all solution mappings  $\mu^\prime \in  \llbracket P_2 \rrbracket_G$, then $sv(P_1)\cap sv(P_2)=\emptyset$, and we have that $m(P)=m(P_1)$. The claim follows by the induction hypothesis as $\mathfrak{I}\models_l m(P_1)\mu$ for $\mu \in \llbracket P_1 \rrbracket_G$.
		\item  Let  $\mu \not\sim\mu^\prime$ for all solution mappings
		  $\mu^\prime \in  \llbracket P_2 \rrbracket_G$. To show that $\mathfrak{I}\models_l  G\cup (m(P_1)\cup \{\langle \_, \notIncludes, rl(m(P_2),sv(P_2))\rangle\})\mu$ it suffices to show that $\mathfrak{I}\models_l m(P_1)\mu$  and $\mathfrak{I}\models_l  \{\langle \_, \notIncludes, rl(m(P_2),sv(P_2))\rangle\}\mu$. We can treat these two parts separately 
		  as the triple $\langle \_, \notIncludes, rl(m(P_2),sv(P_2))\rangle$ has graph terms as arguments and all blank nodes and variables occurring in these are scoped locally within the graph terms. We know that $\mathfrak{I}\models_l m(P_1)\mu$ by the induction hypothesis. It remains to show that $\mathfrak{I}\models_l  \{\langle \_, \notIncludes, rl(m(P_2),sv(P_2))\rangle\}\mu$.  We do that by contradiction assuming that $\mathfrak{I}\models_l \{\langle \_, \includes, rl(m(P_2),sv(P_2))\rangle\}\mu$.  Let $blank(P_2)$ be the set of all blank nodes occurring in $P_2$.  Then there exists a mapping $\sigma: var(m(P_2))\cup blank(m(P_2))\rightarrow I_G\cup B_G\cup L_G$ such that  $G\models_l m(P_2)(\mu \cup \sigma)$  with $\mu^\prime=(\mu\cup\sigma)|_{sv(m(P_2))}$ and the induction hypothesis we get a mapping $\mu^\prime \in  \llbracket P_2 \rrbracket_G$ which is compatible with $\mu$. This contradicts the initial assumption.
	\end{itemize}
	\item["$\Leftarrow$"] 	Let $G\models_l m(P)\mu$, $\mathfrak{I}\models_l G$ and $\mu: sv(P)\rightarrow I\cup L$. 
	We note that in this case $sv(P)=sv(P_1)$ and
	\[m(P)=\begin{cases}
		m(P_1), \text{ if } sv(P_2)\cap sv(P_1)= \emptyset\\
		m(P_1)\cup \{\langle \_, \notIncludes, rl(m(P_2),sv(P_2))\rangle 
		\}, \text{ else},
	\end{cases}
	\] 
	We again consider the two cases separately:
	\begin{itemize}
		\item If $sv(P_1)\cap sv(P_2)=\emptyset$ then $\text{dom}(\mu)\cap \text{dom}(\mu^\prime)=\emptyset$  for all solution mappings  $\mu^\prime \in  \llbracket P_2 \rrbracket_G$. We thus have to show that $\mu \in  \llbracket P_1 \rrbracket_G$. This follows directly for the induction hypothesis as
	$m(P)=m(P_1)$ and  $G\models_l m(P_1)\mu$. 
		\item If $sv(P_1)\cap sv(P_2)\neq\emptyset$,  
		then
		$m(P)=m(P_1)\cup \{\langle \_, \notIncludes, rl(m(P_2),sv(P_2))\rangle\}$. We have to show that $\mu \in  \llbracket P_1 \rrbracket_G\setminus_M  \llbracket P_2 \rrbracket_G$ which in this case means that for all solution mappings  
		$\mu^\prime \in \llbracket P_2 \rrbracket_G$ we have that $\mu\not\sim \mu^\prime$. We do that by contradiction and assume that there exists a mapping  $\mu^\prime \in  \llbracket P_2 \rrbracket_G$ with $\mu\sim \mu^\prime$.  By the induction hypothesis, this  means that $G\models_l m(P_2)\mu^\prime$ and as $\mu\sim \mu'$ also that $G\models_l (m(P_2)\mu)\mu''$ where $\mu'':dom(\mu')\setminus dom(\mu)\rightarrow I\cup L$ and is defined by $\mu''(x):=\mu'(x)$ for all $x\in dom(\mu'')$.  With that we get that $G\models_l (m(P_2)\mu)\mu^{\prime\prime}$.  Following the definition of $\includes$, this means that 
	   $\mathfrak{I}\models  \{\langle close(G) \includes, m(P_2))\rangle\}\mu$ but as $rl(m(P_2),sv(P_2)$ only differs from 
	   from $m(P_2)$ by the variable which are in the domain of $\mu''$ we define a mapping $rn$ from the relabeled  variables occurring in $rl(m(P_2), sv(P_2))$ to the variables occurring in $m(P_2)$, with $\mu''\circ rn$, we get that 
	   $\mathfrak{I}\models  \{\langle close(G) \includes, rl(m(P_2),sv(P_2)))\rangle\}\mu$  which in our context means that $\mathfrak{I}\models  \{\langle \_ \includes, rl(m(P_2),sv(P_2)))\rangle\}\mu$ which contradicts that $\mathfrak{I}\models \{\langle \_, \notIncludes, rl(m(P_2),sv(P_2))\rangle\}\mu $. 
	\end{itemize}	
\end{description}	
	%
\item For
 $P$ of the form $\textsc{FNE}(P_1,P_2)$, the argumentation is analogous to the second case of the minus construct. 
 \item For
 $P$ of the form $\textsc{FE}(P_1,P_2)$ we can use the same argumentation as for \textsc{and}. The only difference is that $\includes$ imposes local scoping. 
\item	The claim for the remaining three constructs $\textsc{filter}(P_1, P_2, R)$, $\textsc{UNION}(P_1,P_2)$ and $\textsc{OPT}(P_1,P_2,R)$ rely on the run-time functions which we assume for this proof to work correctly.  \qed
	\end{itemize}
	\end{proof}

\subsection{From queries to consequences}\label{sec:consequence}
Given a SPARQL query $(n~x) \text{ WHERE } P$, we define the function $h$ which takes 
the query form $f=(n~x)$ and the SPARQL graph pattern $P$ as input and produces an N3 graph pattern as output.

Let $F$ be the set of query forms and ``$\lis$'' a function which, provided with a set of variables $V$, produces an ordered list of these.  We define the head function $h: F\times GP\rightarrow N3P$ as follows:
\[
h(f,P)=\begin{cases}
Q, \text{ if } f=(\textsc{construct}, Q),\\
\{\langle \_,\result,l\rangle\}, \text{ if  } f=(\textsc{select}, l), \text{ if } l\neq *,\\
\{\langle \_,\result,\lis(sv(P))\rangle\}, \text{ if  } f=(\textsc{select}, *).
\end{cases}
\]
where $\_$ stands for an arbitrary blank node. 
Note that we use the scoping function $sv$ from above to resolve the asterisk.

For our example query $P$ with its concrete realisation in Listing~\ref{lst:minus-example}, our mapping produces the consequence\footnote{In our implementation, the result triple also includes the original \textsc{select} variable names as strings, so query results are returned in a way similar to SPARQL.} 
\begin{equation}
h((SELECT, *), P)=\{\langle \_, \result, \lis(sv(P))\rangle=\{\langle \_, \result, (?x\ ?n)\rangle\}.
\end{equation}
Combining the results from applying the functions $m$ and $n$ to the query in Listing~\ref{lst:minus-example} we get the following N3 rule as a translation:
\[
\{ \langle ?x, p, ?n \rangle, \langle \_ ,\notIncludes ,\{\langle ?x, q, ?m\rangle, \langle ?m, r, ?n_{new}\rangle \}\rangle\} \rightarrow \langle \_, \result, (?x\ ?n)\rangle
\]

We can use Lemma \ref{lemma:1} to prove the following:
\begin{lemma}
	Given a query $(\textsc{Construct} ~ Q) \textsc{where}~P$ and  a proper RDF graph $G$ : If  $\mu \in \llbracket P \rrbracket_G$, then $G\cup \{m(P)\rightarrow h((\textsc{Construct}~Q), P) \}\models_l Q\mu $
\end{lemma}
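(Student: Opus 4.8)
The plan is to reduce the statement to Lemma~\ref{lemma:1} together with the log semantics of the implication predicate $\impl$ from Section~\ref{builtin}. By the definition of the head function $h$ in Section~\ref{sec:consequence}, the head of the rule translated from a \textsc{construct} query is just the template $Q$, so the rule in question is the single N3 triple $\langle m(P),\impl, Q\rangle$. Hence it suffices to take an arbitrary interpretation $\mathfrak{I}$ with $\mathfrak{I}\models_l G\cup\{\langle m(P),\impl,Q\rangle\}$ and show $\mathfrak{I}\models_l Q\mu$; the claimed entailment follows.

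For this, I would first use the ``$\Rightarrow$'' direction of Lemma~\ref{lemma:1}: from $\mu\in\llbracket P\rrbracket_G$ we get $G\models_l m(P)\mu$, and since $\mathfrak{I}\models_l G$ this yields $\mathfrak{I}\models_l m(P)\mu$. I would then instantiate condition (\textbf{2}) of the log semantics of $\impl$ with a mapping $A:V\to D\times T_c$ that extends $\mu$, setting $A_2(v):=\mu(v)$ and $A_1(v):=\mathfrak{I}(\mu(v))$ for the free variables $v$ of $m(P)$ (the coherence requirement $\mathfrak{I}(A_2(v))=A_1(v)$ holds because $\mu$ ranges over $I\cup L\subseteq T_c$; the relabeled/fresh variables inside the graph terms of $m(P)$ are locally existentially quantified and need not be covered by $A$). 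For this $A$, $\mathfrak{I}[A]$ interprets the graph term $m(P)$ as the normalised representation of the closed graph $m(P)\mu$ — here I invoke the observation already made in the proof of Lemma~\ref{lemma:1} that $m(P)\mu$ has no free variables, so the local-scoping peculiarities of $\includes$/$\notIncludes$ cause no trouble — and likewise interprets $Q$ as $Q\mu$. Since $\mathfrak{I}\models_b\langle m(P),\impl,Q\rangle$ (from $\mathfrak{I}\models_l$ that triple), we have $\mathfrak{I}[A]\models_b\langle m(P),\impl,Q\rangle$, and $\mathfrak{I}[A](m(P)),\mathfrak{I}[A](Q)\in GT$, so condition (\textbf{2}) gives: if $\mathfrak{I}[A]\models_l m(P)\mu$ then $\mathfrak{I}[A]\models_l Q\mu$. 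The premise holds by the previous step, so $\mathfrak{I}[A]\models_l Q\mu$, i.e.\ $\mathfrak{I}\models_l Q\mu$.

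The step I expect to require the most care is matching the blank-node handling of the \textsc{construct} template with N3's existentials. In SPARQL, $Q\mu$ is built by replacing each blank node $b$ of $Q$ by a fresh $b_\mu$ not occurring in $G$, with distinct blank nodes kept distinct; in N3, the blank nodes of the conclusion graph term $Q$ are existentially quantified and local to that graph term, so when the rule fires they are witnessed by fresh existentials. I would argue these agree: the witnesses guaranteed by the $\impl$ semantics for $\mathfrak{I}[A]\models_l Q\mu$ can be chosen outside the terms of $G$ and pairwise distinct, which is exactly the freshness condition defining $Q\mu$. A secondary point needing a remark is the possibility that $\mu$ does not bind all of $sv(P)$ (the situation produced by \textsc{opt}): then a variable of $Q$ may be left unbound by $\mu$, and one appeals to the convention fixed in Section~\ref{sec:sparql_sem}, mirrored on the N3 side, that such variables are replaced by a dedicated ``unbound'' term so that $Q\mu$ and $\mathfrak{I}[A](Q)$ still coincide; the case $var(Q)\setminus sv(P)\neq\emptyset$ was explicitly excluded. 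Apart from these points, the argument is a routine unfolding of the $\impl$ semantics on top of Lemma~\ref{lemma:1}.
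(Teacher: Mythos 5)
Your proposal is correct and follows essentially the same route as the paper's own (much terser) proof: identify the rule head as $Q$, invoke Lemma~\ref{lemma:1} to get $\mathfrak{I}\models_l m(P)\mu$, apply the log semantics of $\impl$ via a mapping $A$ with $A(x)=(\mathfrak{I}(\mu(x)),\mu(x))$, and note that blank nodes in the \textsc{construct} template are locally existentially quantified so they align with SPARQL's fresh blank nodes. Your write-up is in fact more explicit than the paper's about verifying condition (\textbf{2}) of the $\impl$ semantics and about the unbound-variable convention, but no new ideas are introduced.
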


\begin{proof}	We first note, that $h((\textsc{Construct}~Q), P)=Q$.  
	Let $\mathfrak{I}\models_l G\cup \{m(P)\rightarrow Q\}$.
	From $\mu \in \llbracket P \rrbracket_G$
follows by  Lemma~\ref{lemma:1} that  $\mathfrak{I}\models_l m(P)\mu$.
As $\mathfrak{I}\models m(P)\rightarrow Q$ and the variables in the rule are universally quantified, we can apply the rule using a mapping $A^\mu$ depending on $\mu$. 
We define $A^\mu(x):=(\mathfrak{I}(\mu'(x)) ,\mu'(x))$.
If $Q\mu$ contains blank nodes, then these are quantified locally and they do not depend on the premise. \qed
\end{proof}

With the lemma we can apply a single translated rule to a ground RDF graph and obtain the same triples the original query provides. 
When applying the rules recursively, we rely on N3 semantics.  Note that the result can easily be extended to RDF graphs containing blank nodes as N3 rules never change the scoping of blank nodes in plain RDF graphs and these are quantified on graph level.

\subsection{SPARQL Concepts in N3}
\label{sec:n3_special}

Our goal is to map one SPARQL query to one N3 rule, however, some SPARQL concepts do not have an equivalent N3 construct. 
To address this mismatch, we introduce a dedicated set of auxiliary \emph{runtime rules}, expressed in native N3, which must be evaluated alongside the translated rules to achieve the intended results. 
These runtime rules can also be integrated with native N3 rules, providing direct support for SPARQL-specific features such as \textsc{union}, \textsc{optional}, and property paths within N3; thus providing a true cross-fertilization.  We explain the rules for \textsc{union} and  \textsc{optional} below.%
\footnote{
	To make the assumption in Section \ref{sec:premise} true, we need to provide a proof that the rules work exactly like their corresponding SPARQL pattern.  We  only provide an explanation here to at least make the rules plausible to the reader.  The main reason for omitting the proof is that the rules for \textsc{optional} are defined using the predicate $\cop$. This predicate has no model theoretic semantics yet. We plan to provide  a formalisation and a proof in future work.
}


For \textsc{union}, a single translated N3 rule will include a triple {\small \texttt{<a> union <b>}}. This triple will be resolved by the following runtime rules:
\begin{align}
\langle x,\unio, y\rangle \leftarrow \langle \_, \includes, x \rangle 
\hspace{2cm} \langle x,\unio, y\rangle \leftarrow \langle \_, \includes, y \rangle
\end{align}\

These rules are evaluated using the subject {\small \texttt{<a>}} and object {\small \texttt{<b>}} of the union triple, and succeeds if at least one of the corresponding subgoals is satisfied.
By that, we mimic disjunction. 
Note that we also support property paths but consider them out of scope for this paper; we address them in future work.
%

\begin{lstlisting}[caption={Example query demonstrating the behavior of \textsc{optional}.}, label={lst:opt-example}, float=t]
SELECT * { 
  :x1 :p ?v . OPTIONAL { :x2 :q ?w .
    OPTIONAL { :x3 :p ?v }}}
\end{lstlisting}
The rules for \textsc{optional} are more complicated. We show them in Figure~\ref{fig:optional}.
As with the \textsc{minus}, we illustrate the complexity of the \textsc{optional} applying an example query (Listing~\ref{lst:opt-example}) to a data graph $G= \{\langle x1, p, 1\rangle,\langle x2, q, 2  \rangle, \langle x3, p, 3 \rangle\}$.

Here, the variable \texttt{?v} will be bound twice with different values; once with 3 in the most deeply nested \textsc{optional} clause, and once with 1 in the mandatory clause. 
For that reason, SPARQL considers the variable bindings of the mandatory and nested \textsc{optional} as non-compatible, and only the former are returned.
Similar as with the \textsc{minus}, however, N3 does not allow different bindings for the same variable in the same premise. 
We again solve the problem by relabelling variables: we represent each $X \in N3GP$  as a pair $(X_P, X_I)$, where $X_P$ is the pattern with the original variables, and $X_I$ is a copy of $X_P$ with ``fresh'' variables to be instantiated. We then use the latter to find matches in the data without variable conflicts. We use the N3 ``copy'' predicate to create $X_I$ from $X_P$.

The \textsc{optional} in SPARQL is implemented using a nested left join.
On the left-hand side, the first rule first invokes the predicate ``$\eval$'', which retrieves the final pattern ($X_P$) and instantiation ($X_I$) of the solution. Here, the next triple with predicate ``$\includes$'' will instantiate the variables in $X_P$ using the values from $X_I$\footnote{Note that this predicate thus also binds variables in the argument graphs.}.
The last two rules with the ``$\unnest$'' predicate deal with nested \textsc{optional} clauses, and thus implement the recursive nature of the left join.
On the right-hand side, the rules implement the actual left join. 
Before binding variables, they use the ``copy'' N3 builtin to create copies $X_I$ of patterns as described above. They further rely on the ``incl'' and ``notIncl'' predicates as described earlier.
 Here, we consider three cases: (1) Solution mappings for both \textsc{optional} arguments are compatible and fulfill the filter condition; (2) There is a solution mapping for the first argument, but not for the second; or, the union of both mappings does not fulfill the filter condition; (3) There is no solution mapping for the first argument, in which case an empty solution pattern is retrieved.

\begin{figure}[t] 
\noindent
\scriptsize
\begin{minipage}[t]{0.45\textwidth}
\begin{align*}
  \langle M, \opt,&O \rangle \\
\leftarrow & \langle (M, O), \eval, (X_P,X_I) \rangle, 
  \langle X_P, \includes, X_I \rangle\\[5pt]
  \langle (M, O),& \eval, (Z_P, Z_I) \rangle\\ 
  \leftarrow & \langle M, \unnest, (X_P, X_I)\rangle, \\
 & \langle O, \unnest, (Y_P, Y_I) \rangle,\\
  &\langle ((X_P, X_I), (Y_P, Y_I)) \leftjoin, (Z_P, Z_I) \rangle\\[5pt]
  \langle X, \unnest,& (Y_P, Y_I) \rangle \\
  \leftarrow & \langle X, \equalTo, \langle M \opt O \rangle \rangle,\\
  &\langle (M,O), \eval, (Y_P, Y_I)\rangle\\[5pt]
  \langle X, \unnest,&(X,\{\}) \rangle \ \\
  \leftarrow & \langle X, \notEqualTo, \langle M \opt O \rangle \rangle
\end{align*}
\end{minipage}
\hspace{0.5cm} 
\hfill
\begin{minipage}[t]{0.45\textwidth}
\begin{align*}
  \langle ((M_P,& M_I),(O_P,O_I)), \leftjoin, (R_P, R_I) \rangle\\ 
  \leftarrow & \langle (M_P, O_P), \conj, R_P \rangle, \\
 &\langle R_P, \cop, R_I \rangle, \langle \_, \includes, R_I \rangle, \\ 
  &\langle R_I, \includes, M_I \rangle,
  \langle R_I, \includes, O_I \rangle\\[5pt]
  \langle ((M_P,& M_I),(O_P,O_I)), \leftjoin, (M_P, R_I) \rangle\\
  \leftarrow & \langle M_P, \cop, R_I\rangle, 
  \langle \_, \includes, R_I\rangle,\\
  &\langle (M_P, O_P), \conj, C_P\rangle, 
  \langle C_P, \cop, C_I\rangle,\\
  & \langle C_I, \includes, R_I\rangle, 
  \langle \_, \notIncludes, C_I\rangle\\[5pt]
  \langle ((M_P,& M_I),(O_P,O_I)),\leftjoin, (\{\}, \{\}) \rangle\\
  \leftarrow & \langle \_, \notIncludes, M_P\rangle
\end{align*}
\end{minipage}
  \caption{\normalsize Runtime rules for \textsc{optional}.}
  \label{fig:optional}
\end{figure}

\section{Evaluation}
\label{sec:eval}
We implemented the above SPARQL-in-N3 (SiN3) translation and compare its performance with state-of-the-art systems. All code is available on github~\cite{onlineappendix}, including an online demo of the Zika screening use case (see below).


\subsection{Evaluation setup}
\label{sec:eval_setup}

\vspace{0.2cm}
\noindent \textbf{Datasets and queries}\\
Our evaluation covers the following use cases:

\vspace{0.1cm}
\textit{LMDB and YAGO}. We apply a similar setup as recSPARQL~\cite{reutter2021}, which used the \textit{LMDB} (Linked Movie Database), an RDF dataset about movies and actors~\cite{hassanzadeh2009linked} (6,148,121 triples); and \textit{YAGO} (Yet Another Great Ontology), RDF data including people, locations, and movies~\cite{pellissier2020yago} (3,000,006 triples).
Movie-related recursive \textsc{construct} queries were formulated on both datasets:

    \textit{Bac1}. Searches for actors with a finite ``Bacon number'', i.e., they co-starred in the same movie with Kevin Bacon or another actor with such a number. \\
\indent \textit{Bac2}. Searches for actors with a finite Bacon number whereby all collaborations were done in movies with the same director. \\
\indent \textit{Bac3}. Searches for actors connected to Kevin Bacon through movies where the director is also an actor (not necessarily in the same movie).

\vspace{0.1cm}
\noindent They further formulated the following queries on the YAGO dataset:\\
\indent \textit{Geo}. Searches for places in which the city of Berlin is (transitively) located.
\indent \textit{MarrUs}. Searches for people who are transitively related to someone through the \textit{isMarriedTo} relation, who owns property within the United States.
Hence, these queries mostly calculate the transitive closure of a given relation.


\vspace{0.05cm}
\textit{Zika screening}. We base ourselves on a prior healthcare use case on Zika screening~\cite{sin3_demo}. 
Using \textsc{construct} queries, we implemented the CDC testing guidance for Zika, 
which determine whether a patient should be tested for Zika, based on a series of factors. 
We expand on this test case with randomly generated datasets, different versions of the used HL7 FHIR vocabulary~\cite{HL7FHIR}, and reasoning over a biomedical ontology, i.e., SNOMED~\cite{SNOMED}.

\vspace{0.05cm}
\textit{Availability}. All datasets and queries are in our git repository \cite{onlineappendix}.

\vspace{0.2cm}
\noindent \textbf{Evaluated systems}\\
We evaluate our SPARQL-in-N3 translation, labelled \textit{sin3}, using the EYE v10.24.10 reasoner~\cite{eyepaper}, which can perform both forward and backward reasoning.
Our prior demo paper~\cite{sin3_demo} 
outlines the implementation of \textit{sin3}, including the different execution steps and rulesets involved.
We compare with two other systems that utilize \textsc{construct} queries for rule-based reasoning, namely \textit{spinrdf}~\cite{spinrdf} and \textit{recSPARQL}~\cite{reutter2021}.
The work by Polleres~\cite{polleres2007} was implemented using an older version of dlvhex 
which we found to no longer be compatible with modern OS. 
As far as we know, the work by Gottlob et al.~\cite{gottlob2015beyond} was not implemented; and, as mentioned, the SparqLog~\cite{angles2023} does not support recursive querying.

\vspace{0.1cm}
\noindent \textbf{Hardware}
The experiments were conducted on a MacBook Pro with an Apple M1 Pro processor, 32 GB of RAM, and a 1 TB SSD, running macOS Sonoma 14.6.1. Each experiment was executed 5 times and results were averaged. 

\vspace{0.1cm}
\noindent \textbf{Compliance}
Regarding compliance with the SPARQL 1.1 specification~\cite{harris2013sparql}, we currently lack support for the \textsc{having} clause, named graphs, expressions in \textsc{bind} and \textsc{select} clauses, and \textsc{limit} clauses in subqueries. We currently support only the \textsc{construct} and \textsc{select} query forms.

\subsection{Results}

\begin{table}[t]
\label{tbl:lmdb_results}
\centering
\begin{tabular}{|l?c|l|l|l?c|l|l?c|l|}
    \hline
    \multicolumn{10}{|c|}{LMDB} \\ \hline
    \multirow{2}{*}{\textbf{Query}} &
      \multicolumn{4}{c?}{\textbf{sin3}} &
      \multicolumn{3}{c?}{\textbf{spinrdf}} &
      \multicolumn{2}{c|}{\textbf{recSPARQL}} \\ \cline{2-10}
    & Load & Gen SPIN & Gen N3 & Exec & Load & Gen SPIN & Exec & Load (TDB) & Exec \\ \hlineB{2}
Bac1 & \multirow{3}{*}{31} & 0.58 & 0.10 & 20.1 & \multirow{3}{*}{26.9} & 0.06 & 12   & \multirow{3}{*}{75.4} & 65.7 \\ \cline{1-1}\cline{3-5}\cline{7-8}\cline{10-10}
Bac2 &                       & 0.58 & 0.08 & 2.4  &                       & 0.02 & 3.9  &                       & 18.4 \\ \cline{1-1}\cline{3-5}\cline{7-8}\cline{10-10}
Bac3 &                       & 0.58 & 0.08 & 9.0  &                       & 0.02 & 19.5 &                       & 24.5 \\ \hlineB{2}

    \multicolumn{10}{|c|}{YAGO} \\ \hline
Bac1   & \multirow{5}{*}{36.1}   & 0.58 & 0.08 & 14.3 & \multirow{5}{*}{44.9}   & 0.02  & 7.5   & \multirow{5}{*}{113686} & 50.2 \\ \cline{1-1}\cline{3-5}\cline{7-8}\cline{10-10}
Bac2   &                         & 0.57 & 0.08 & 1.5  &                         & 0.003 & 0.01  &                         & 3.2  \\ \cline{1-1}\cline{3-5}\cline{7-8}\cline{10-10}
Bac3   &                         & 0.57 & 0.08 & 5.0  &                         & 0.003 & 2.8   &                         & 22.4 \\ \cline{1-1}\cline{3-5}\cline{7-8}\cline{10-10}
Geo    &                         & 0.57 & 0.07 & 0.4  &                         & 0.003 & 0.004 &                         & 14.4 \\ \cline{1-1}\cline{3-5}\cline{7-8}\cline{10-10}
MarrUs &                         & 0.57 & 0.08 & 12.3 &                         & 0.003 & 5.1   &                         & 83.2 \\ \hline
\end{tabular}

\caption{Execution times for LMDB and YAGO queries (times in seconds).}
\end{table}

We only show and discuss the results for the \textit{LMDB} and \textit{YAGO} experiments for space considerations. 
Our git repository~\cite{onlineappendix} details the \textit{Zika} experiment.

Both \textit{sin3} and \textit{spinrdf} involve converting input SPARQL queries into SPIN code (Gen SPIN). 
Subsequently, \textit{sin3} converts the SPIN code into N3 rules (Gen N3). 
We observe that Gen SPIN takes much longer for \textit{sin3} than \textit{spinrdf}, as the former requires starting a separate JVM for generating SPIN code. 
\textit{Sin3} and \textit{spinrdf} load the dataset into memory (Load), 
whereas \textit{recSPARQL} pre-creates a persistent Jena Triple DataBase (Load TDB). 
\textit{Sin3} is competitive regarding reasoning performance (Exec): for the LMDB dataset, on average, \textit{sin3} takes ca. 10.5s, \textit{spinrdf} ca. 11.8s, and recSPARQL ca. 36.2s. 
For the Yago dataset, on average, \textit{sin3} takes ca. 6.7s , \textit{spinrdf} ca. 3.1s, and recSPARQL ca. 34.7s. 
Hence, we note that \textit{spinrdf}, which uses a query engine (Jena), performs much better for the Yago dataset. We revisit this in future work. 

A benefit of rule languages is that both forward- and backward reasoning strategies can be used.
To illustrate this, we used the Deep Taxonomy (DT)\footnote{\url{https://eulersharp.sourceforge.net/2009/12dtb/}}, a \textsc{construct} query that implements the OWL2 RL \textit{cax-sco} rule~\cite{OWL2RL}, and a custom query requesting all instances of class \texttt{A2}. Backward reasoning has an advantage here, as only a small part of the transitive closure has to be searched; forward reasoning has to materialize the entire transitive closure. Indeed, after translating the OWL2 RL query to a backward-chaining N3 rule, \textit{eye} takes avg. 34ms, while \textit{spinrdf} does not return results after 1h (results not shown in table). 

\section{Conclusions}
\label{sec:conclusions}


We argue that the full potential of logic reasoning on the SW has not been realized.
Given that SPARQL is a well-established standard for SW querying, one way to meet this potential is to use SPARQL \textsc{construct} queries to express logic rules.
We implemented reasoning over \textsc{construct} queries by translating them into N3. 
As N3 natively operates on RDF triples and follows SW principles, translations can be exchanged between parties. 
Our translation maps 1 query to 1 rule, enabled by the novel concept of ``runtime'' rules, making the translations easier to exchange and manually tweak. 
The runtime rules further allow using SPARQL features, namely \textsc{union} and \textsc{optional}, directly within ``vanilla'' N3 rules.
In fact, in the long term, we foresee the line between N3 and SPARQL syntax becoming blurred, and this translation is a first step in that direction.
Our evaluation shows a competitive performance compared to state-of-the-art systems, and the benefits of having both forward and backward reasoning.



\textit{Future Work}. 
We aim to implement the missing SPARQL 1.1 features (see Section~\ref{sec:eval_setup}). 
We will formalize our translation of property paths and are currently investigating their efficient evaluation, for instance, by generating partially grounded, forward-chaining rules.
The experiments showed that our rule translations are at a disadvantage for complex data structures (i.e., requiring many joins). 
Future work involves the optimization of rule-based reasoning in such a setting; e.g., rule engine optimizations, and combining forward with backward reasoning.
Finally, we plan to extend the cross-fertilization between N3 and SPARQL, by supporting SPARQL expressions (e.g., \textsc{filter}) in N3 rules.


%
%
%
 \bibliographystyle{splncs04}
\bibliography{bibliography}

\end{document}